\pgfplotsset{compat=1.13}
\titlespacing*{\subsection}{0pt}{6pt}{3pt}
\newcommand\blankfootnote[1]{%
  \begin{NoHyper}%
  \let\svthefootnote\thefootnote%
  \let\thefootnote\relax\footnotetext{#1}%
  \let\thefootnote\svthefootnote%
  \end{NoHyper}%
}
\newcommand{\eps}{{\varepsilon}}        
\newcommand{\fr}{{\mathcal F}}
\newcommand{\NN}{{\mathbb{N}}}
\newcommand{\complex}{\mathbb{C}}
\newcommand{\reals}{\mathbb{R}}
\newcommand{\Bs}{{\mathcal{B}}}
\newcommand{\Ss}{{\mathcal{S}}}
\newcommand{\subgaussnorm}[1]{{\tau\left({#1}\right)}}
\newtheorem{theorem}{Theorem}
\newtheorem{cor}{Corollary}
\newtheorem{assume}{Assumption}
\newtheorem{remark}{Remark}
\newtheorem{definition}{Definition}
\newcommand{\numUsers}{{K}}
\newcommand{\indexUsers}{{k}}
\newcommand{\numChannelUses}{{M}}
\newcommand{\indexChannelUses}{{m}}
\newcommand{\channelVector}{{H}}
\newcommand{\noiseVector}{{N}}
\newcommand{\normalizedMessages}{{X}}
\newcommand{\messagesMatrix}{{Q}}
\newcommand{\rxVector}{{Y}}
\newcommand{\randomBaseVector}{{R}}
\newcommand{\transpose}[1]{{{#1}^T}}
\newcommand{\fadingTransformMatrix}{{A}}
\newcommand{\noiseTransformMatrix}{{B}}
\newcommand{\vectorNorm}[1]{\| {#1} \|_2}
\newcommand{\Expectation}{{\mathbb{E}}}
\newcommandx{\absolute}[3][1=\left, 2=\right]{#1 | #3 #2 |}
\newcommand{\tail}{{\varepsilon}}
\newcommand{\operatornorm}[1]{\| {#1} \|}
\newcommand{\frobeniusnorm}[1]{\| {#1} \|_F}
\newcommand{\Probability}{{\mathbb{P}}}
\newcommand{\powerconstraint}{{P}}
\newcommand{\identityMatrix}{{\mathbf{id}}}
\newcommand{\fadingApproxError}{{\eta}}
\newcommand{\uncorrelatedApprox}{A_i}
\newcommand{\errorProbOperatorNormTerm}{L}
\newcommand{\errorProbFrobeniusNormTerm}{F}
\newcommand{\errorProbDependenceTerm}{D}
\newcommand{\transmitPower}{{a}}
\begin{document}
\title{Over-The-Air Computation in Correlated Channels}

\author{
  \IEEEauthorblockN{
    Matthias Frey\IEEEauthorrefmark{1},
    Igor Bjelakovi\'c\IEEEauthorrefmark{2} %
    and %
    S\l awomir~Sta\'{n}czak\IEEEauthorrefmark{1}\IEEEauthorrefmark{2}\\ %
  } %
  \IEEEauthorblockA{
    \IEEEauthorrefmark{1}Technische Universität Berlin, Germany\\
    \IEEEauthorrefmark{2}Fraunhofer Heinrich Hertz Institute, Berlin, Germany 
  }%
  \vspace*{-2em}
}

\maketitle
\blankfootnote{
\vspace{-14pt}
This work was supported by the German Research Foundation (DFG) within their priority programs SPP 1798 ``Compressed Sensing in Information Processing'' and SPP 1914 "Cyber-Physical Networking", as well as under grant STA 864/7.

This work was partly supported by a Nokia University Donation. We gratefully acknowledge the support of NVIDIA Corporation with the donation of the DGX-1 used for this research.
}
\begin{abstract}
This paper addresses the problem of Over-The-Air (OTA) computation in wireless networks which has the potential to realize huge efficiency gains for instance in training of distributed ML models. We provide non-asymptotic, theoretical guarantees for OTA computation in fast-fading wireless channels where the fading and noise may be correlated. The distributions of fading and noise are not restricted to Gaussian distributions, but instead are assumed to follow a distribution in the more general sub-gaussian class. Furthermore, our result does not make any assumptions on the distribution of the sources and therefore, it can, e.g., be applied to arbitrarily correlated sources. We illustrate our analysis with numerical evaluations for OTA computation of two example functions in large wireless networks: the arithmetic mean and the Euclidean norm.
\end{abstract}

\section{Introduction and Prior Work}
\label{sec:intro}

Machine learning (ML) models are increasingly trained on data
collected by wireless sensor networks, with the goal to perform ML tasks in these
networks such as QoS prediction and anomaly detection. ML has undeniably a
great potential, but it is not available for free. The benefits of ML
must be set in relation to the effort and resources required. Since radio
communication resources (spectrum and energy) are generally scarce,
there is a strong interest~\cite{jordan2015machine} in resource-saving methods that
would allow ML models to be efficiently trained and used in resource-constrained wireless networks. In fact, great efforts have been made in recent years to reduce the communication overhead for training and deploying ML models~\cite{amiri2020machine,ahn2019wireless,mcmahan2017communication,konecny2016federated}.

However, for applications in massive networks of IoT devices, it is no longer sufficient to achieve constant or linear improvements, and instead the scaling in the number of users needs to be improved fundamentally,
otherwise the system performance may be severely degraded~\cite{gupta2000capacity}. Such
improvements can be achieved by abandoning the philosophy of
strictly separating the process of communication and
application-specific computation. For training of ML models and the computation of predictors, it is usually not necessary to accumulate the full data at a central point. Instead, it suffices to compute a function of the raw data (such as, e.g., the parameters of an ML model that fits the distributed training data). Since the amount of training data scales at least linearly with the number of users in the network while the complexity of the employed ML models could be constant or at least scale much more favorably, the paradigm shift from accumulating all of the data and processing it centrally towards over-the-air computation has the potential to realize fundamental gains in the scaling behavior of the communication cost.
Of course, these observations are known and have been used to improve
the performance of wireless systems in many scenarios~\cite{gastpar2003source,goldenbaum2013robust,goldenbaum2013harnessing,goldenbaum2014nomographic,nazer2007computation,zhan2009mimo,ordentlich2011practical,nazer2011compute,nazer2016expanding,goldenbaum2016harnessing}.

Against this background, in this paper, we advocate application-specific
schemes that take into account the underlying task, such as
computation of a function, directly at the physical layer.
The key ingredients that open up the door to such a paradigm shift in the
design of such schemes are provided by a fundamental result~\cite{kolmogorov1957representation,sprecher1965structure,buck1976approximate,buck1982nomographic} stating that every
function has a \emph{nomographic representation}. These representations allow us
to exploit the \emph{superposition property of the wireless channel}
for an efficient function computation over the ``air''. The
superposition (or broadcast) property is the ability of the wireless
channel to ``form'' (noisy) linear combinations of information-bearing
symbols transmitted by different nodes. This property is usually seen
as a problem in traditional wireless networks where it is the source
for interference between concurrent independent transmissions. If in
contrast, nodes cooperate for a common goal of function
computation, then the superposition property is in general not a
source for interference but rather for \emph{``inference''} which
should be exploited to compute functions. Whether the superposition
property can be exploited for performance gains strongly depends on
errors and uncertainties introduced by the wireless channel, and on how it combines the transmitted signals at the output. In
this paper,  we focus on the class $\fr_{\textrm{\textrm{mon}}}$ consisting of bounded and measurable nomographic functions, where the outer function is restricted in a way that allows for controlling the impact of noise and fading introduced by the channel. We derive bounds on the number of channel uses needed to approximate a function in $\fr_{\textrm{\textrm{mon}}}$ up to a desired accuracy and deal with a channel model that may exhibit a certain degree of correlation in noise and fading between users and over time.

Pre- and post-processing schemes for function approximation over fast-fading channels appeared in~\cite{kiril}, but no theoretical guarantees are derived. In~\cite{bjelakovic2019distributed}, we derived such theoretical guarantees for the case of independent fading and noise. \cite{liu2020over,dong2020blind} derived theoretical bounds on the error of over-the-air function computation, using either an assumption of known channel state information or of slow fading.

The direct application of over-the-air computation techniques to distributed gradient descent has received a lot of attention recently since such techniques can be used to solve the empirical risk minimization problem for ML models such as neural networks in the case of distributed training data without having to collect the training data at a central point~\cite{amiri2020machine,ahn2019wireless,mcmahan2017communication,konecny2016federated}.

\subsection{Paper Contributions}
Contrary to this prior work, in this paper we do not assume a particular source distribution on the transmitted messages; indeed, we derive uniform bounds that hold independently of how the sources are distributed and therefore also cover the important case of arbitrarily correlated sources. We focus on one-shot approximation of function values, which stands in contrast to a scenario in which the same functions are computed repeatedly, as in the case of the works that focus on the application to network coding. Our channel model is more general than in prior works, since we do not only consider sub-gaussian fading and noise, but generalize the system model from~\cite{bjelakovic2019distributed} to allow for correlations in fading and noise both between users and over time. Due to lack of space, we omit the details on how our results can be applied to ML problems. Two examples of such applications can be found in~\cite{frey2020over}.

%
%
\section{System Model and Problem Statement}     

%
%
\subsection{System Model}
\label{sec:systemmodel}
We consider the following channel model with $K$ transmitters and one receiver: For $m=1,\ldots, M$, the channel output at the $m$-th channel use is given by
\begin{equation*}
Y(m)=\sum\nolimits_{k=1}^{K} H_k(m)x_k(m)+ N(m).
\end{equation*}
$x_k(m)\in \complex$ for $k=1,\ldots, K$ and $m=1,\ldots, M$ are the transmit symbols which have to satisfy the peak power constraint $|x_k(m)  |^2 \le P$. $H_k(m)$, $k=1\ldots, K$, $m=1,\ldots, M$, are complex-valued random variables such that for every $m=1, \ldots ,M$ and $k=1,\ldots,K$, the real part $H_k^r (m)$ and the imaginary part $H_k^i (m)$  of $H_k(m)$ are zero-mean sub-gaussian\footnote{A random variable $X$ is called \emph{sub-gaussian} if $\subgaussnorm{X} :=  \inf \{t > 0: \forall \lambda \in \reals~~ \mathbb{E}\exp \left( \lambda (X - \mathbb{E} X) \right)   \le \exp \left( \lambda^2 t^2 / 2 \right)      \}$ is finite. $\subgaussnorm{\cdot}$ defines a semi-norm on the space of sub-gaussian variables and is called the \emph{sub-gaussian norm}. For more details see~\cite{frey2020over,buldygin,wainwright,vershynin}.} random variables with variance $1$. $N(m)$, $m=1,\ldots, M$, are complex-valued random variables, with both the real and the imaginary parts assumed to be centered and sub-gaussian.

\begin{definition}
\label{def:user-uncorrelated}
We say that the fading is \emph{user-uncorrelated} if for every $\indexUsers_1 \neq \indexUsers_2$, $j \in \{i,r\}$ and $\indexChannelUses$, the random variables $H^j_{\indexUsers_1}(\indexChannelUses)$ and $H^j_{\indexUsers_2}(\indexChannelUses)$ are independent.
\end{definition}

In order to introduce our dependency model for the fading and noise, we define 
\begin{align}
\label{eq:channelvector}
\channelVector := \transpose{(\channelVector(1), \dots, \channelVector(2\numChannelUses))}
\end{align}
where for $\indexChannelUses = 1, \dots, \numChannelUses$,
\begin{align*}
\channelVector(2\indexChannelUses-1) &:= (H_1^r(\indexChannelUses), \dots, H_K^r(\indexChannelUses))
\\
\channelVector(2\indexChannelUses) &:= (H_1^i(\indexChannelUses), \dots, H_K^i(\indexChannelUses)).
\end{align*}
So $\channelVector$ is the vector of all fading coefficients. Similarly, let 
\begin{align}
\label{eq:noisevector}
\noiseVector := \transpose{(N^r(1), N^i(1), \dots, N^r(\numChannelUses), N^i(\numChannelUses))}
\end{align}
be the vector of all the instances of additive noise.

\begin{assume}
\label{assume:correlation-model}
There exist a vector $\randomBaseVector$ of $(2\numUsers\numChannelUses + 2\numChannelUses)$ independent random variables with sub-gaussian norm at most $1$ and matrices $\fadingTransformMatrix \in \reals^{2\numUsers\numChannelUses \times (2\numUsers\numChannelUses + 2\numChannelUses)}$ and $\noiseTransformMatrix \in \reals^{2\numChannelUses \times (2\numUsers\numChannelUses + 2\numChannelUses)}$ such that $\channelVector = \fadingTransformMatrix \randomBaseVector$ and $\noiseVector = \noiseTransformMatrix \randomBaseVector$.
\end{assume}

In the case of $\randomBaseVector$ distributed i.i.d. standard Gaussian, Assumption~\ref{assume:correlation-model} amounts to a standard representation of arbitrarily correlated (and thus arbitrarily interdependent) multivariate Gaussian vectors. Therefore, by replacing $\randomBaseVector$ with a vector of independent sub-gaussian entries, we obtain a straightforward generalization of the Gaussian case which specializes to arbitrary correlations (which means that it also covers arbitrary stochastic dependence in the Gaussian case).

\begin{definition}
\label{def:user-uncorrelated-matrix}
We say that a matrix $\fadingTransformMatrix$ \emph{generates user-uncorrelated fading} if $H = \fadingTransformMatrix \randomBaseVector$ is user-uncorrelated according to Definition~\ref{def:user-uncorrelated}.
\end{definition}

The instantaneous realizations of the fading and noise need not be known at the transmitters, but statistical information on the average noise power and channel gain is necessary. This typically amounts to information about large scale fading at the transmitters only, but not about the small scale fading. While the error bounds depend on the structure of the correlation in fading and noise, the design of the pre- and post-processing operations does not require this knowledge.
%
%
\subsection{Distributed Approximation of Functions}
\begin{figure}
\begin{tikzpicture}
\coordinate                      (inK)          at (0,0);
\coordinate                      (in2)          at (0,2);
\coordinate                      (in1)          at (0,3);
\node[rectangle,draw]            (encK)         at (2,0) {$E_K^M$};
\node[rectangle,draw]            (enc2)         at (2,2) {$E_2^M$};
\node[rectangle,draw]            (enc1)         at (2,3) {$E_1^M$};
\node[rectangle,minimum height=3.5cm,align=center,draw] (channel) at (4,1.5) {$M$-fold\\channel};
\node[rectangle,draw]            (dec)          at (6,1.5) {$D_M$};
\coordinate                      (out)          at (8,1.5);
\node                            (vdots)        at (.8,1.2) {\Shortstack{. . . . . .}};

\draw[->] (inK) -- (encK) node[midway,above] {$s_K$};
\draw[->] (in2) -- (enc2) node[midway,above] {$s_2$};
\draw[->] (in1) -- (enc1) node[midway,above] {$s_1$};

\draw[->] (encK) -- (encK-|channel.west) node[midway,above] {$x_K^M$};
\draw[->] (enc2) -- (enc2-|channel.west) node[midway,above] {$x_2^M$};
\draw[->] (enc1) -- (enc1-|channel.west) node[midway,above] {$x_1^M$};

\draw[->] (channel) -- (dec) node[midway,above] {$Y^M$};
\draw[->] (dec) -- (out) node[midway,above] {$\tilde{f}$};
\end{tikzpicture}
\caption{System model. $E_1^M, \dots, E_K^M$ are randomized pre-processing functions, and $D^M$ is the post-processing function.}
\label{fig:system}
\end{figure}
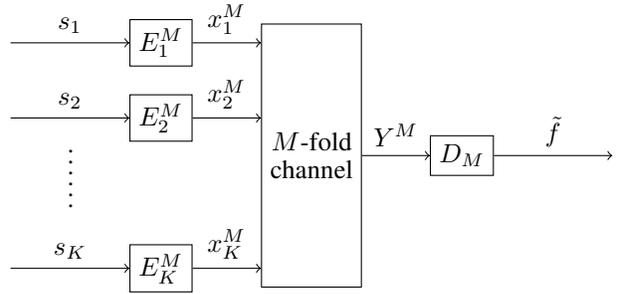

Our goal is to approximate functions  $f: \Ss_1\times \ldots \times \Ss_K\to\reals$  in a distributed setting. The sets $\Ss_1,\ldots \Ss_K\subseteq \reals $ are assumed to be closed and endowed with their natural Borel $\sigma$-algebras 
$\Bs(\Ss_1),\ldots ,\Bs(\Ss_K)$, and we consider the  product $\sigma$-algebra $\Bs (\Ss_1)\otimes \ldots \otimes \Bs(\Ss_K)$ on the set $ \Ss_1\times \ldots \times \Ss_K $. Furthermore, the functions  $f: \Ss_1\times \ldots \times \Ss_K\to\reals$ under consideration are assumed to be measurable.

A Distributed Function Approximation scheme (DFA) consists of pre- and post-processing functions as depicted in Fig.~\ref{fig:system}. So in order to approximate $f$, the transmitters apply their pre-processing maps to
$(s_1,\ldots, s_K)\in \Ss_1\times \ldots \times \Ss_K$
resulting in $E_1^M (s_1), \ldots, E_K^M(s_K)$ which are sent to the receiver using the channel $M$ times. 
The receiver observes the output of the channel and applies the recovery map $D^M$. The whole process defines an estimate $\tilde{f}$ of $f$.

Let $\varepsilon > 0,\delta \in (0,1)$ and $f: \Ss_1\times \ldots \times \Ss_K\to\reals $ be given. We say that $f$ is $\varepsilon$-approximated after $M$ channel uses with confidence level $\delta$ if there is a DFA
$(E^M,D^M)$ such that the resulting estimate $\tilde{f}$ of $f$ satisfies
\begin{equation}\label{eq:eps-delta-approx}
\mathbb{P}( |    \tilde{f} (s^K)- f(s^K)       |\ge \eps      )\le \delta
\end{equation}
for all $s^K:= (s_1, \ldots , s_K) \in \Ss_1\times \ldots \times \Ss_K $.
Let $M(f, \varepsilon, \delta)$ denote the smallest number of channel uses such that there is an approximation scheme $(E^M, D^M)$ for $f$ satisfying (\ref{eq:eps-delta-approx}). We call $M(f, \varepsilon, \delta)$ the communication cost for approximating a function $f$
with accuracy $\varepsilon$ and confidence $\delta$.

%
%
%
\subsection{The class of functions to be approximated}
%
%
%
\begin{definition}
\label{def:Fmon}
A measurable function $f: \Ss_1\times \ldots \times \Ss_K\to \reals$ is said to belong to $\fr_{\textrm{\textrm{mon}}}$ if there exist bounded and measurable functions
$(f_k)_{k \in \{1, \ldots , K\}}$, a measurable set $D\subseteq \reals$ with the property $f_1(\Ss_1)+\ldots + f_K(\Ss_K)\subseteq D$, a measurable function $F:D\to \reals$ such that
for all $(s_1, \ldots , s_K)\in \Ss_1\times \ldots \times \Ss_K$ we have
\begin{equation*}
f(s_1, \ldots, s_K)= F\left(  \sum\nolimits_{k=1}^K f_k(s_k)     \right ),
\end{equation*}
and there is a strictly increasing function $\Phi : [0, \infty) \to [0, \infty)$  with $\Phi(0)=0$ and
\begin{equation*}
| F(x)-F(y)|\le \Phi( | x-y | )
\end{equation*}
for all $x,y \in D$. We call $\Phi$ an \emph{increment majorant} of $f$.
\end{definition}
Notable examples of functions in $ \fr_{\textrm{\textrm{mon}}}$ include sums and weighted means as well as $p$-norms for $p \geq 1$. For more details and more general characterizations, see~\cite{bjelakovic2019distributed,frey2020over}.
%
%
We are now in a position to state our main theorem on approximation of functions in $ \fr_{\textrm{\textrm{mon}}}$.
To this end, we introduce the notion of total spread of  the inner part of $ f\in  \fr_{\textrm{\textrm{mon}}}$ as
\begin{equation*}
\bar{\Delta}(f):=\sum\nolimits_{k=1}^K ( \phi_{\max,k}-\phi_{\min,k}),
\end{equation*}
along with the $\max $-spread
\begin{equation}\label{eq:max-inner-spread}
\Delta (f):= \max_{1\le k \le K} ( \phi_{\max,k}-\phi_{\min,k}),
\end{equation}
where
 \begin{equation}\label{eq:phi-def-spread}
 \phi_{\min,k}:= \inf_{s \in \mathcal{S}_k} f_k (s), \quad \phi_{\max,k}:=\sup_{s \in \mathcal{S}_k} f_k( s).
 \end{equation}
 We define the relative spread with power constraint $P$ as
 \begin{equation*}
 \Delta (f\| P):= P \cdot \frac{\bar{\Delta} (f)}{\Delta (f)}.
 \end{equation*}
We use $\operatornorm{\cdot}$ and $\frobeniusnorm{\cdot}$ to denote the operator and Frobenius norm of matrices, respectively.
\begin{theorem}
\label{th:correlated}
Let $f\in  \fr_{\textrm{\textrm{mon}}}$, $M\in \NN$, and the power constraint $P \in \reals_{+} $ be given. Let $\Phi$ be an increment majorant of $f$. Assume the correlation in the fading $H$ and noise $N$ is described by $\fadingTransformMatrix$ and $\noiseTransformMatrix$ as described in Assumption~\ref{assume:correlation-model}. Let $\uncorrelatedApprox \in \reals^{2\numChannelUses\numUsers \times (2\numChannelUses\numUsers + 2\numChannelUses)}$ be a matrix which generates user-uncorrelated fading (as described in Definition~\ref{def:user-uncorrelated-matrix}), and let
\[
\fadingApproxError
:=
\operatornorm{(\fadingTransformMatrix+\uncorrelatedApprox)\transpose{(\fadingTransformMatrix-\uncorrelatedApprox)}}.
\]
Then there exist pre- and post-processing operations such that
\begin{multline}
\label{eq:correlated-statement}
\Probability\left(
  \absolute{\bar{f}-f(s_1, \dots, s_\numUsers)}
  \geq
  \tail
\right)
\\
\leq
\begin{aligned}[t]
&2\exp\left(
-
\frac{\numChannelUses \Phi^{-1}(\tail)^2}{
  16\errorProbFrobeniusNormTerm
  +
  \errorProbDependenceTerm
  +
  4\Phi^{-1}(\tail)
  \errorProbOperatorNormTerm
}
\right)
\\
&+
2\exp\left(
  -
  \frac{\numChannelUses \Phi^{-1}(\tail)^2}{
    256
    \errorProbFrobeniusNormTerm
    +
    32 \Phi^{-1}(\tail)
    \errorProbOperatorNormTerm
  }
\right),
\end{aligned}
\end{multline}
where
\begin{align*}
\errorProbOperatorNormTerm
&=
\left(
  \sqrt{\bar{\Delta}(f)} \operatornorm{\fadingTransformMatrix} + \sqrt{\frac{\Delta(f)}{\powerconstraint}}\operatornorm{\noiseTransformMatrix}
\right)^2
\\
\errorProbFrobeniusNormTerm
&=
\errorProbOperatorNormTerm
\left(
  \sqrt{\frac{\bar{\Delta}(f)}{\numChannelUses}} \frobeniusnorm{\fadingTransformMatrix} + \sqrt{\frac{\Delta(f)}{\powerconstraint\numChannelUses}} \frobeniusnorm{\noiseTransformMatrix}
\right)^2
\\
\errorProbDependenceTerm
&=
\bigg(
  4\sqrt{2\numChannelUses}
  \bar{\Delta}(f)
  \fadingApproxError
  +
  4
  \frac{\Delta(f)}{\sqrt{\powerconstraint\numChannelUses}}
  \frobeniusnorm{\fadingTransformMatrix\transpose{\noiseTransformMatrix}}
\bigg)^2.
\end{align*}
\end{theorem}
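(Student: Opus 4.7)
My plan is to build on the non-coherent energy-detection scheme typically used in OTA function approximation over fast-fading channels. At transmitter $k$ I would send, for every channel use $m$, the constant symbol $x_k(m) = \sqrt{\powerconstraint \cdot (f_k(s_k) - \phi_{\min,k})/\Delta(f)}$, which automatically satisfies the peak-power constraint. At the receiver I would compute the empirical average $\bar{Z} := \tfrac{1}{\numChannelUses} \sum_{m=1}^\numChannelUses |Y(m)|^2$, apply an affine transformation that inverts its expectation under the idealized \emph{user-uncorrelated and fading/noise-independent} model, and feed the result into $F$ to obtain $\bar f$. By the increment-majorant property of $f \in \fr_{\mathrm{mon}}$, $\absolute{\bar f - f(s_1,\dots,s_\numUsers)} \le \Phi(\absolute{\hat S - S})$ with $S := \sum_k f_k(s_k)$ and $\hat S$ the affine image of $\bar Z$, so the problem reduces to bounding $\Probability(\absolute{\hat S - S} \ge \Phi^{-1}(\tail))$.

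Next I would substitute $\channelVector = \fadingTransformMatrix \randomBaseVector$ and $\noiseVector = \noiseTransformMatrix \randomBaseVector$ into $Y(m)$ and expand $|Y(m)|^2$, writing $\bar Z$ as a quadratic form $\transpose{\randomBaseVector} \ma{M} \randomBaseVector$ (plus a deterministic constant) for an explicit symmetric matrix $\ma{M}$ whose blocks are built from the pre-processing amplitudes and from $\fadingTransformMatrix, \noiseTransformMatrix$. I would then use the identity
\[
\hat S - S
\;=\;
\bigl(\transpose{\randomBaseVector} \ma{M}\randomBaseVector - \Expectation \transpose{\randomBaseVector} \ma{M}\randomBaseVector\bigr)
\;+\;
\bigl(\Expectation \transpose{\randomBaseVector} \ma{M}\randomBaseVector - S\bigr),
\]
which cleanly separates a stochastic fluctuation from a deterministic bias caused by the actual correlation structure deviating from the user-uncorrelated, independent-fading/noise idealization.

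For the stochastic term I would apply the Hanson--Wright inequality for vectors of independent sub-gaussian coordinates, which yields a bound of the form $2\exp(-c\min(t^2/\frobeniusnorm{\ma{M}}^2, t/\operatornorm{\ma{M}}))$. Splitting $\ma{M}$ into a pure-fading block built from $\fadingTransformMatrix$, a pure-noise block built from $\noiseTransformMatrix$, and a fading/noise cross block built from $\fadingTransformMatrix\transpose{\noiseTransformMatrix}$, and carefully carrying through the amplitude scalings (which are controlled by $\bar\Delta(f)$ on the fading side and by $\Delta(f)/\powerconstraint$ on the noise side), I would show that $\operatornorm{\ma{M}} \lesssim \errorProbOperatorNormTerm/\numChannelUses$ and $\frobeniusnorm{\ma{M}}^2 \lesssim \errorProbOperatorNormTerm \errorProbFrobeniusNormTerm/\numChannelUses$, which after Hanson--Wright gives exactly the Bernstein-type denominators $16\errorProbFrobeniusNormTerm + 4\Phi^{-1}(\tail)\errorProbOperatorNormTerm$ and $256\errorProbFrobeniusNormTerm + 32\Phi^{-1}(\tail)\errorProbOperatorNormTerm$ appearing in the two exponentials of~\eqref{eq:correlated-statement}.

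For the bias term, the contribution from off-diagonal user correlations in $\fadingTransformMatrix\transpose{\fadingTransformMatrix}$ I would handle by comparing to a user-uncorrelated surrogate $\uncorrelatedApprox$: using the factorization $\fadingTransformMatrix\transpose{\fadingTransformMatrix} - \uncorrelatedApprox\transpose{\uncorrelatedApprox} = \tfrac12\bigl((\fadingTransformMatrix+\uncorrelatedApprox)\transpose{(\fadingTransformMatrix-\uncorrelatedApprox)} + (\fadingTransformMatrix-\uncorrelatedApprox)\transpose{(\fadingTransformMatrix+\uncorrelatedApprox)}\bigr)$ together with the definition of $\fadingApproxError$ yields the $4\sqrt{2\numChannelUses}\,\bar\Delta(f)\,\fadingApproxError$ term, while the fading/noise cross-correlation bias gives the $4\Delta(f)\frobeniusnorm{\fadingTransformMatrix\transpose{\noiseTransformMatrix}}/\sqrt{\powerconstraint\numChannelUses}$ term, so that the total bias is at most $\sqrt{\errorProbDependenceTerm}$. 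A union bound of the form $\Probability(\text{stoch} + \text{bias} \ge t) \le \Probability(\text{stoch} \ge t/2) + \Probability(\text{stoch} + \text{bias} \ge t)$ separates the bias into the first exponential (producing the $\errorProbDependenceTerm$ term) while the second exponential handles the pure stochastic tail, which is why $\errorProbDependenceTerm$ appears in only one of the two denominators. The main obstacle will be the constant-bookkeeping in steps three and four: tracking how the $\sqrt{\powerconstraint}$, $\Delta(f)$, $\bar\Delta(f)$, and $1/\numChannelUses$ factors propagate through the block norms of $\ma{M}$ and through the bias-splitting union bound, so that the inequalities produce precisely the constants $16, 256, 4, 32$ and the specific groupings in $\errorProbOperatorNormTerm$, $\errorProbFrobeniusNormTerm$, and $\errorProbDependenceTerm$.
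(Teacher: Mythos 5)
Your overall architecture (energy detection, a quadratic form in $\randomBaseVector$, Hanson--Wright, and comparison of $\fadingTransformMatrix$ with a user-uncorrelated surrogate $\uncorrelatedApprox$) is the paper's, but you have dropped the one ingredient the paper's scheme cannot do without: the pre-processing is \emph{randomized}. The paper transmits $X_k(m)=\sqrt{\transmitPower_k}\,U_k(m)$ with $\transmitPower_k=\frac{P}{\Delta(f)}(f_k(s_k)-\phi_{\min,k})$ and $U_k(m)$ i.i.d.\ uniform on $\{-1,+1\}$, independent of fading and noise. Since $\E[U_{k_1}(m)U_{k_2}(m)]=\delta_{k_1 k_2}$ and $\E[U_k(m)]=0$, one gets
\begin{equation*}
\E\vectorNorm{\rxVector}^2-\E\vectorNorm{\noiseVector}^2
=2\numChannelUses\sum\nolimits_k \transmitPower_k
=\tfrac{2\numChannelUses P}{\Delta(f)}\sum\nolimits_k\bigl(f_k(s_k)-\phi_{\min,k}\bigr)
\end{equation*}
\emph{exactly, for every correlation structure}, so $\bar g$ inverts the mean with no model mismatch and the correlations affect only concentration. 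With your deterministic amplitudes, the terms $\sum_{k_1\neq k_2}\sqrt{\transmitPower_{k_1}\transmitPower_{k_2}}\,\E[H^j_{k_1}(m)H^j_{k_2}(m)]$ and $\sum_k\sqrt{\transmitPower_k}\,\E[H^j_k(m)N^j(m)]$ survive in the expectation; they depend on the messages through the individual $\sqrt{\transmitPower_k}$ rather than through $\sum_k \transmitPower_k$, so no post-processing can cancel them and your estimator carries a genuine message-dependent bias. Your plan for absorbing that bias does not work as written: the union bound $\Probability(\mathrm{stoch}+\mathrm{bias}\geq t)\leq\Probability(\mathrm{stoch}\geq t/2)+\Probability(\mathrm{stoch}+\mathrm{bias}\geq t)$ is circular (its right-hand side contains its left-hand side), and a deterministic bias $b$ can only ever enter a tail bound as a shift of the threshold, giving something like $\exp(-c\numChannelUses(\Phi^{-1}(\tail)-b)^2/\cdot)$, which is vacuous for $\Phi^{-1}(\tail)<b$ and cannot be rearranged into the claimed form in which $\errorProbDependenceTerm$ is \emph{added inside} a Bernstein-type denominator.

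That placement of $\errorProbDependenceTerm$ is the tell: it is a variance proxy, not a bias. In the paper's scheme $\vectorNorm{\rxVector}^2=\transpose{\randomBaseVector}\transpose{(\messagesMatrix\fadingTransformMatrix+\noiseTransformMatrix)}(\messagesMatrix\fadingTransformMatrix+\noiseTransformMatrix)\randomBaseVector$ involves \emph{two} independent layers of randomness, $\randomBaseVector$ and the signs in $\messagesMatrix$, which is why the paper needs a variation of Hanson--Wright rather than the off-the-shelf version you invoke. Conditionally on the signs, Hanson--Wright in $\randomBaseVector$ produces one exponential with denominator $256\errorProbFrobeniusNormTerm+32\Phi^{-1}(\tail)\errorProbOperatorNormTerm$; the fluctuation of the conditional quadratic form as a function of the signs --- which couples $U_{k_1}U_{k_2}$ to the off-diagonal user blocks of $\fadingTransformMatrix\transpose{\fadingTransformMatrix}$ (controlled via $\uncorrelatedApprox$ by $\fadingApproxError$) and $U_k$ to $\fadingTransformMatrix\transpose{\noiseTransformMatrix}$ --- produces the other exponential, and that is exactly where $\errorProbDependenceTerm$ enters. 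To repair your argument, reinstate the Rademacher dithering, treat the problem as a two-stage concentration argument (or a suitably decoupled Hanson--Wright for the joint vector $(\randomBaseVector,U)$), and only then do the norm bookkeeping of your third and fourth steps.
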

\begin{remark}
The matrix $\uncorrelatedApprox$ can be regarded as a user-uncorrelated approximation of $\fadingTransformMatrix$ and Theorem~\ref{th:correlated} leaves the choice of $\uncorrelatedApprox$ free. $\fadingApproxError$ is thus a measure for how strong the correlation is between different users. Important special cases are
\begin{itemize}
\item $\uncorrelatedApprox = \fadingTransformMatrix$ (i.e., the fading in the considered channel shows no correlations between the users) and consequently $\fadingApproxError = 0$
\item $\uncorrelatedApprox = 0$ (i.e., we have no user-uncorrelated approximation of $\fadingTransformMatrix$) and therefore 
$
 \fadingApproxError = \operatornorm{\fadingTransformMatrix}^2.
$
This shows us that $\uncorrelatedApprox$ in the sense of Theorem~\ref{th:correlated} always exists.
\end{itemize}
\end{remark}
\begin{cor}
(cf. \cite{frey2020over}) In the setting of Theorem~\ref{th:correlated} with uncorrelated fading and noise, i.e.,
\begin{align*}
\fadingTransformMatrix
:=
\begin{pmatrix}
\sigma_F \identityMatrix_{2\numChannelUses\numUsers} & 0
\end{pmatrix}
,~~
\noiseTransformMatrix
:=
\begin{pmatrix}
0 & \sigma_N \identityMatrix_{2\numChannelUses}
\end{pmatrix},
\end{align*}  
we have
\begin{multline*}
\Probability\left(
  \absolute{\bar{f}-f(s_1, \dots, s_\numUsers)}
  \geq
  \tail
\right)
\\
\leq
\begin{aligned}[t]
&2\exp\left(
-
\frac{\numChannelUses \Phi^{-1}(\tail)^2}{
  16\errorProbFrobeniusNormTerm'
  +
  4\Phi^{-1}(\tail)
  \errorProbOperatorNormTerm'
}
\right)
\\
&+
2\exp\left(
  -
  \frac{\numChannelUses \Phi^{-1}(\tail)^2}{
    256
    \errorProbFrobeniusNormTerm'
    +
    32 \Phi^{-1}(\tail)
    \errorProbOperatorNormTerm'
  }
\right),
\end{aligned}
\end{multline*}
where
\begin{align*}
\errorProbOperatorNormTerm'
&=
\left(
  \sqrt{\bar{\Delta}(f)} \sigma_F + \sqrt{\frac{\Delta(f)}{\powerconstraint}}\sigma_N
\right)^2
\\
\errorProbFrobeniusNormTerm'
&=
\errorProbOperatorNormTerm'
\left(
  \sqrt{2\numUsers\bar{\Delta}(f)} \sigma_F
  +
  \sqrt{\frac{2\Delta(f)}{\powerconstraint}}\sigma_N
\right)^2.
\end{align*}
\end{cor}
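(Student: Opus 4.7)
The corollary is a direct specialization of Theorem~\ref{th:correlated} to the block-diagonal choice of $\fadingTransformMatrix$ and $\noiseTransformMatrix$, so my plan is to verify that (i) one particular choice of $\uncorrelatedApprox$ makes the user-correlation measure $\fadingApproxError$ vanish, (ii) the cross Frobenius norm $\frobeniusnorm{\fadingTransformMatrix\transpose{\noiseTransformMatrix}}$ vanishes so that $\errorProbDependenceTerm=0$, and (iii) evaluate the operator and Frobenius norms of $\fadingTransformMatrix$ and $\noiseTransformMatrix$ explicitly, then plug everything into the bound of Theorem~\ref{th:correlated}.

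First, set $\uncorrelatedApprox := \fadingTransformMatrix$. Since the entries of $\randomBaseVector$ are independent and $\fadingTransformMatrix = (\sigma_F \identityMatrix_{2\numChannelUses\numUsers},~0)$ selects and rescales the first $2\numChannelUses\numUsers$ coordinates of $\randomBaseVector$, the resulting entries of $\channelVector = \fadingTransformMatrix \randomBaseVector$ are mutually independent and in particular user-uncorrelated in the sense of Definition~\ref{def:user-uncorrelated}, so $\uncorrelatedApprox$ genuinely generates user-uncorrelated fading (Definition~\ref{def:user-uncorrelated-matrix}). With this choice,
\[
\fadingApproxError = \operatornorm{(\fadingTransformMatrix+\uncorrelatedApprox)\transpose{(\fadingTransformMatrix-\uncorrelatedApprox)}} = \operatornorm{(2\fadingTransformMatrix)\cdot 0} = 0.
\]

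Next, observe that $\fadingTransformMatrix$ and $\noiseTransformMatrix$ have disjoint column supports: $\fadingTransformMatrix$ is nonzero only on the first $2\numChannelUses\numUsers$ columns while $\noiseTransformMatrix$ is nonzero only on the last $2\numChannelUses$ columns. Consequently $\fadingTransformMatrix \transpose{\noiseTransformMatrix}$ is the zero matrix, and $\frobeniusnorm{\fadingTransformMatrix\transpose{\noiseTransformMatrix}} = 0$. Combined with $\fadingApproxError=0$, the dependence term of Theorem~\ref{th:correlated} satisfies $\errorProbDependenceTerm=0$, so it drops out of the first exponential.

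Finally, a direct calculation gives $\operatornorm{\fadingTransformMatrix}=\sigma_F$, $\operatornorm{\noiseTransformMatrix}=\sigma_N$, $\frobeniusnorm{\fadingTransformMatrix}=\sigma_F\sqrt{2\numChannelUses\numUsers}$, and $\frobeniusnorm{\noiseTransformMatrix}=\sigma_N\sqrt{2\numChannelUses}$. Substituting these into the definitions of $\errorProbOperatorNormTerm$ and $\errorProbFrobeniusNormTerm$ in Theorem~\ref{th:correlated} yields exactly $\errorProbOperatorNormTerm'$ and $\errorProbFrobeniusNormTerm'$ as stated (the factors $\sqrt{2\numChannelUses\numUsers}/\sqrt{\numChannelUses}=\sqrt{2\numUsers}$ and $\sqrt{2\numChannelUses}/\sqrt{\numChannelUses}=\sqrt{2}$ combine cleanly with the scalar factors $\sigma_F, \sigma_N$ inside the Frobenius term). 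The resulting bound is precisely the displayed inequality in the corollary, completing the proof. There is no real obstacle here; the only thing to be careful about is correctly identifying that taking $\uncorrelatedApprox=\fadingTransformMatrix$ is admissible under Definition~\ref{def:user-uncorrelated-matrix} and that the column supports of $\fadingTransformMatrix$ and $\noiseTransformMatrix$ are disjoint, both of which are immediate from the block structure.
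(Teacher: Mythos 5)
Your proposal is correct and matches the intended derivation: the corollary is indeed just Theorem~\ref{th:correlated} specialized via $\uncorrelatedApprox=\fadingTransformMatrix$ (giving $\fadingApproxError=0$, as the paper's own remark notes for uncorrelated fading), $\fadingTransformMatrix\transpose{\noiseTransformMatrix}=0$ (so $\errorProbDependenceTerm=0$), and the norm evaluations $\operatornorm{\fadingTransformMatrix}=\sigma_F$, $\frobeniusnorm{\fadingTransformMatrix}=\sigma_F\sqrt{2\numChannelUses\numUsers}$, $\operatornorm{\noiseTransformMatrix}=\sigma_N$, $\frobeniusnorm{\noiseTransformMatrix}=\sigma_N\sqrt{2\numChannelUses}$, which yield exactly $\errorProbOperatorNormTerm'$ and $\errorProbFrobeniusNormTerm'$. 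The paper gives no separate proof for this corollary, and your substitution checks out in every term.
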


\begin{cor}
\label{cor:communication-cost}
As for the approximation communication cost, we have
\begin{align}\label{eq:commcost}
M(f,\varepsilon,\delta)
\leq
\frac{\log 4 - \log \delta}{\Phi^{-1}(\varepsilon)^2}
\Gamma,
\end{align}
where
\begin{multline*}
\Gamma
:=
\max\big(
  16\errorProbFrobeniusNormTerm
  +
  \errorProbDependenceTerm
  +
  4\Phi^{-1}(\tail)
  \errorProbOperatorNormTerm
  ,
  256
  \errorProbFrobeniusNormTerm
  +
  32 \Phi^{-1}(\tail)
  \errorProbOperatorNormTerm
\big).
\end{multline*}
\end{cor}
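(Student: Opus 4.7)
The plan is to derive the communication cost bound directly from Theorem~\ref{th:correlated} by inverting the tail-probability inequality~(\ref{eq:correlated-statement}) in $M$. Since $M(f,\varepsilon,\delta)$ is the smallest $M$ for which a DFA achieves the $(\varepsilon,\delta)$-approximation criterion~(\ref{eq:eps-delta-approx}), it suffices to find the smallest $M$ for which the right-hand side of~(\ref{eq:correlated-statement}) is at most $\delta$; the pre- and post-processing operations promised by Theorem~\ref{th:correlated} then constitute a valid DFA witnessing the bound.

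The right-hand side of~(\ref{eq:correlated-statement}) is a sum of two exponential terms with denominators $C_1 := 16\errorProbFrobeniusNormTerm + \errorProbDependenceTerm + 4\Phi^{-1}(\varepsilon)\errorProbOperatorNormTerm$ and $C_2 := 256\errorProbFrobeniusNormTerm + 32\Phi^{-1}(\varepsilon)\errorProbOperatorNormTerm$. I would use the standard union-style device of forcing each of the two terms to be at most $\delta/2$. For the $i$-th term, this gives the condition
\[
2\exp\!\left(-\frac{M\,\Phi^{-1}(\varepsilon)^2}{C_i}\right) \le \frac{\delta}{2},
\]
which rearranges (taking logarithms and using monotonicity of $\exp$) to
\[
M \ge \frac{\log 4 - \log \delta}{\Phi^{-1}(\varepsilon)^2}\, C_i .
\]

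Both conditions are simultaneously satisfied once $M$ dominates the larger of the two right-hand sides, i.e.\ once $M$ exceeds $\tfrac{\log 4 - \log\delta}{\Phi^{-1}(\varepsilon)^2}\,\Gamma$ with $\Gamma := \max(C_1,C_2)$, which is precisely the quantity defined in the corollary statement. Taking the smallest such $M$ yields~(\ref{eq:commcost}).

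No real obstacle is anticipated here; the argument is essentially algebraic manipulation of the bound provided by Theorem~\ref{th:correlated}. The only minor point to be careful about is that $\Gamma$ depends on $\varepsilon$ through the factors of $\Phi^{-1}(\varepsilon)$ inside $C_1$ and $C_2$, so the stated bound is not literally linear in $\log(4/\delta)$ with a constant independent of $\varepsilon$; however, for any fixed $\varepsilon$ the quantity $\Gamma$ is a finite constant, and the derivation above goes through verbatim.
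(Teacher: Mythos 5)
Your proposal is correct and follows essentially the same route as the paper: the paper bounds the sum of the two exponentials by $4\exp\bigl(-M\Phi^{-1}(\varepsilon)^2/\Gamma\bigr)$ with $\Gamma$ the larger denominator and then solves for $M$, which is algebraically identical to your device of forcing each term below $\delta/2$. Your closing remark about the $\varepsilon$-dependence of $\Gamma$ is a fair observation but does not affect the validity of the bound.
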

\begin{proof}
We upper bound (\ref{eq:correlated-statement}) as
\begin{align*}
 \mathbb{P} (  |    \bar{f} (s^K)- f(s^K)       |\ge \eps       )
 \le  
4\exp\left(
-\frac{\numChannelUses \Phi^{-1}(\tail)^2}{\Gamma}
\right),
\end{align*}
and solve the expression for $M$ concluding the proof.
\end{proof}
\begin{remark}
If $F$ is $C$-Lipschitz continuous, we can replace $\Phi^{-1}(\varepsilon)$ in (\ref{eq:commcost}) and the expression for $\Gamma$ with $\varepsilon/C$.
\end{remark}

\section{Sketch of the Proof of Theorem~\ref{th:correlated}}
In this section, we give the details of the pre- and post-processing steps that realize the error bounds given in the statement of Theorem~\ref{th:correlated}. The complete proof can be found in~\cite{frey2020over}. We only mention here that the proof requires a derivation of a slight variation of the Hanson-Wright inequality~\cite[Theorem 6.2.1]{vershynin}. The scheme presented here is a variation of the one that originally appeared in~\cite{kiril} and is based on concepts and ideas, e.g., from~\cite{gastpar2003source,goldenbaum2013robust,goldenbaum2013harnessing,goldenbaum2014nomographic}, but we extend the theoretical analysis to cover sub-gaussian correlated fading and noise and derive finite blocklength bounds.

\subsection{Pre-Processing}\label{sec:pre-proc}
In the pre-processing step we encode the function values $f_k(s_k)$, $k=1,\ldots ,K$ as transmit power:  
\begin{equation*}
X_k(m):= \sqrt{\transmitPower_k} U_k (m),
1\leq m\leq M
\end{equation*}
with $\transmitPower_k = g_k(f_k (s_k))$ and $g_k: [ \phi_{\min,k}, \phi_{\max,k}] \to [ 0, P  ] $,
\begin{equation*}
  g_k(t):= \frac{P}{\Delta(f)} (t-\phi_{\min,k}),
  \end{equation*}
 where $\Delta(f)$ is given in (\ref{eq:max-inner-spread}) and $\phi_{\min,k}$ is defined in (\ref{eq:phi-def-spread}).\\
 $U_k (m)$, $k=1, \ldots, K$, $m=1, \ldots ,M$ are i.i.d. with the uniform distribution on $\{-1,+1\}$. We assume the random variables $U_k (m)$, $k=1,\ldots,K$, $m=1\ldots,M$ are independent of
 $H_k(m)$, $k=1,\ldots, K$, $m=1,\ldots,M$, and $N(m)$, $m=1,\ldots,M$.
We write the vector of transmitted signals at channel use $\indexChannelUses$ as
$
\normalizedMessages(\indexChannelUses) := \left(X_1(m),\dots,X_K(m)\right)
$
and combine them in a matrix as
\begin{align*}
\begin{multlined}
\messagesMatrix
:=\\
\begin{pmatrix}
\normalizedMessages(1) & 0                      & 0                      & 0                      & 0      & \dots                                  & 0                                      \\
0                      & \normalizedMessages(1) & 0                      & 0                      & 0      & \dots                                  & 0                                      \\
0                      & 0                      & \normalizedMessages(2) & 0                      & 0      & \dots                                  & 0                                      \\
0                      & 0                      & 0                      & \normalizedMessages(2) & 0      & \dots                                  & 0                                      \\
                       &                        &                        &                        & \ddots &                                        &                                        \\
0                      & 0                      & 0                      & \dots                  & 0      & \normalizedMessages(\numChannelUses) & 0                                      \\
0                      & 0                      & 0                      & \dots                  & 0      & 0                                      & \normalizedMessages(\numChannelUses)
\end{pmatrix}.
\end{multlined}
\end{align*}
%
%
\subsection{Post-Processing}\label{sec:post-proc}
The vector $\rxVector$ of received signals across the $\numChannelUses$ channel uses can be written as
$
\rxVector = \messagesMatrix \cdot \channelVector + \noiseVector,
$
where $\channelVector$ and $\noiseVector$ are given in (\ref{eq:channelvector}) and (\ref{eq:noisevector}).
The post-processing is based on the received energy which has the form
\begin{align}
\label{eq:hwExpression}
\vectorNorm{\rxVector}^2
&=
\transpose{\rxVector} \rxVector
=
\transpose{(\messagesMatrix \fadingTransformMatrix \randomBaseVector + \noiseTransformMatrix \randomBaseVector)}
(\messagesMatrix \fadingTransformMatrix \randomBaseVector + \noiseTransformMatrix \randomBaseVector).
\end{align}

The receiver computes its estimate $\bar{f}$ of $f(s_1, \dots, s_K)$ as
\begin{align*}
\bar{f}
:=
F(\bar{g}(
  \vectorNorm{\rxVector}^2
  -
  \Expectation \vectorNorm{\noiseVector}^2
)),
\end{align*}
where
\[
\bar{g}(t) := \frac{\Delta(f)}{2 \cdot M \cdot  P}t + \sum\nolimits_{k=1}^K\phi_{\min,k}.
\]

\section{Numerical Results}
\begin{figure}
\centering
\vspace{-6pt}
\begin{tikzpicture} 
\begin{axis}[
  hide axis,
  xmin=10,
  xmax=50,
  ymin=0,
  ymax=0.4,
  legend columns = 4
]
\addlegendimage{mark=triangle};
\addlegendentry{~~mean~~~~};
\addlegendimage{mark=square};
\addlegendentry{~~norm~~~~};
\addlegendimage{};
\addlegendentry{~~DFA~~~~};
\addlegendimage{dashed};
\addlegendentry{~~TDMA};
\end{axis}
\end{tikzpicture}
\begin{subfigure}[b]{\textwidth}
\vspace{6pt}
\begin{tikzpicture}[scale=.97]
\definecolor{col1}{named}{red}
\definecolor{col2}{named}{blue}
\begin{semilogyaxis}[
  xlabel={noise power in dB},
  ylabel={mean squared error},
  ymin=1e-6,
  ymax=1e2,
  xmin=-20,
  xmax=50,
  legend style={legend pos=north west},
  legend columns=2
]
\addlegendimage{empty legend};
\addlegendentry{$K$};
\addlegendimage{empty legend};
\addlegendentry{$M$};
\addlegendimage{col1};
\addlegendentry{$40$};
\addlegendimage{empty legend};
\addlegendentry{$10^3$};
\addlegendimage{col2};
\addlegendentry{$2560$};
\addlegendimage{empty legend};
\addlegendentry{$10^5$};
\addplot[col1,mark=triangle] table [x=noise_power_dB, y=dist_users_40_chuses_1000, col sep=comma] {noniid_mean_noise.csv};
\addplot[col2,mark=triangle] table [x=noise_power_dB, y=dist_users_2560_chuses_100000, col sep=comma] {noniid_mean_noise.csv};
\addplot[col1,dashed,mark=triangle,mark options={solid}] table [x=noise_power_dB, y=tdma_users_40_chuses_1000, col sep=comma] {noniid_mean_noise.csv};
\addplot[col2,dashed,mark=triangle,mark options={solid}] table [x=noise_power_dB, y=tdma_users_2560_chuses_100000, col sep=comma] {noniid_mean_noise.csv};
\addplot[col1,mark=square] table [x=noise_power_dB, y=dist_users_40_chuses_1000, col sep=comma] {noniid_norm_noise.csv};
\addplot[col2,mark=square] table [x=noise_power_dB, y=dist_users_2560_chuses_100000, col sep=comma] {noniid_norm_noise.csv};
\addplot[col1,dashed,mark=square,mark options={solid}] table [x=noise_power_dB, y=tdma_users_40_chuses_1000, col sep=comma] {noniid_norm_noise.csv};
\addplot[col2,dashed,mark=square,mark options={solid}] table [x=noise_power_dB, y=tdma_users_2560_chuses_100000, col sep=comma] {noniid_norm_noise.csv};
\end{semilogyaxis}
\end{tikzpicture}
\subcaption{Mean squared error of the approximation schemes dependent\newline on the channel noise power.}
\label{fig:noise}
\end{subfigure}

\begin{subfigure}[b]{\textwidth}
\vspace{6pt}
\begin{tikzpicture}[scale=.97]
\definecolor{col1}{named}{red}
\definecolor{col2}{named}{blue}
\begin{loglogaxis}[
  xlabel={users},
  ylabel={mean squared error},
  ymin=1e-6,
  ymax=1e2,
  xmin=10,
  xmax=2560,
  legend style={at={(.5,.97)}, anchor=north},
  legend columns=2
]
\addlegendimage{empty legend};
\addlegendentry{noise};
\addlegendimage{empty legend};
\addlegendentry{$M$};
\addlegendimage{col1};
\addlegendentry{$-20$ dB};
\addlegendimage{empty legend};
\addlegendentry{$10^3$};
\addlegendimage{col2};
\addlegendentry{$20$ dB};
\addlegendimage{empty legend};
\addlegendentry{$10^5$};
\addplot[col1,mark=triangle] table [x=users, y=dist_noise_-20.0_chuses_1000, col sep=comma] {noniid_mean_users.csv};
\addplot[col2,mark=triangle] table [x=users, y=dist_noise_20.0_chuses_100000, col sep=comma] {noniid_mean_users.csv};
\addplot[col1,dashed,mark=triangle,mark options={solid}] table [x=users, y=tdma_noise_-20.0_chuses_1000, col sep=comma] {noniid_mean_users.csv};
\addplot[col2,dashed,mark=triangle,mark options={solid}] table [x=users, y=tdma_noise_20.0_chuses_100000, col sep=comma] {noniid_mean_users.csv};
\addplot[col1,mark=square] table [x=users, y=dist_noise_-20.0_chuses_1000, col sep=comma] {noniid_norm_users.csv};
\addplot[col2,mark=square] table [x=users, y=dist_noise_20.0_chuses_100000, col sep=comma] {noniid_norm_users.csv};
\addplot[col1,dashed,mark=square,mark options={solid}] table [x=users, y=tdma_noise_-20.0_chuses_1000, col sep=comma] {noniid_norm_users.csv};
\addplot[col2,dashed,mark=square,mark options={solid}] table [x=users, y=tdma_noise_20.0_chuses_100000, col sep=comma] {noniid_norm_users.csv};
\end{loglogaxis}
\end{tikzpicture}
\subcaption{Mean squared error of the approximation schemes dependent\newline on the number of participating transmitters.}
\label{fig:users}
\end{subfigure}

\begin{subfigure}[b]{\textwidth}
\vspace{6pt}
\begin{tikzpicture}[scale=.97]
\definecolor{col1}{named}{red}
\definecolor{col2}{named}{blue}
\begin{semilogyaxis}[
  xlabel={channel uses},
  ylabel={mean squared error},
  ymin=1e-6,
  ymax=1e2,
  xmin=1000,
  xmax=100000,
  legend style={at={(.97,.3)}, anchor=south east},
  legend columns=2
]
\addlegendimage{empty legend};
\addlegendentry{noise/dB};
\addlegendimage{empty legend};
\addlegendentry{$K$};
\addlegendimage{col1};
\addlegendentry{$-20$};
\addlegendimage{empty legend};
\addlegendentry{$40$};
\addlegendimage{col2};
\addlegendentry{$30$};
\addlegendimage{empty legend};
\addlegendentry{$2560$};
\addplot[col1,mark=triangle] table [x=chuses, y=dist_users_40_noise_-20.0, col sep=comma] {noniid_mean_chuses.csv};
\addplot[col2,mark=triangle] table [x=chuses, y=dist_users_2560_noise_30.0, col sep=comma] {noniid_mean_chuses.csv};
\addplot[col1,dashed,mark=triangle,mark options={solid}] table [x=chuses, y=tdma_users_40_noise_-20.0, col sep=comma] {noniid_mean_chuses.csv};
\addplot[col2,dashed,mark=triangle,mark options={solid}] table [x=chuses, y=tdma_users_2560_noise_30.0, col sep=comma] {noniid_mean_chuses.csv};
\addplot[col1,mark=square] table [x=chuses, y=dist_users_40_noise_-20.0, col sep=comma] {noniid_norm_chuses.csv};
\addplot[col2,mark=square] table [x=chuses, y=dist_users_2560_noise_30.0, col sep=comma] {noniid_norm_chuses.csv};
\addplot[col1,dashed,mark=square,mark options={solid}] table [x=chuses, y=tdma_users_40_noise_-20.0, col sep=comma] {noniid_norm_chuses.csv};
\addplot[col2,dashed,mark=square,mark options={solid}] table [x=chuses, y=tdma_users_2560_noise_30.0, col sep=comma] {noniid_norm_chuses.csv};
\end{semilogyaxis}
\end{tikzpicture}
\subcaption{Mean squared error of the approximation schemes dependent\newline on the number of channel uses.}
\label{fig:chuses}
\end{subfigure}
\caption{Visualization of the simulation results.}
\end{figure}
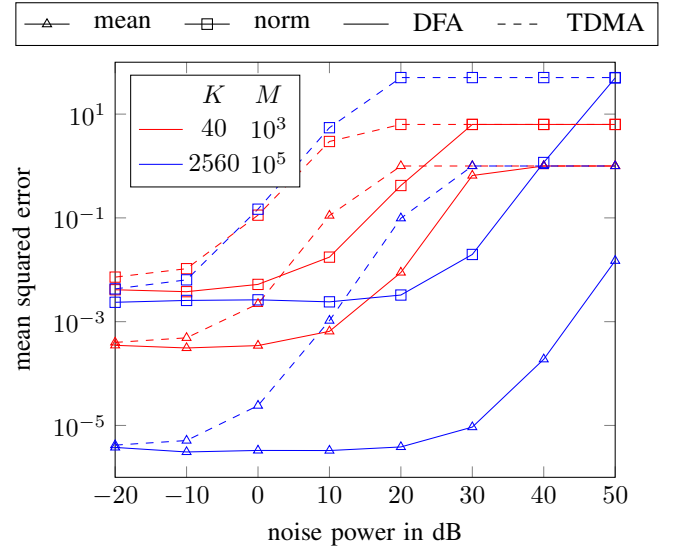
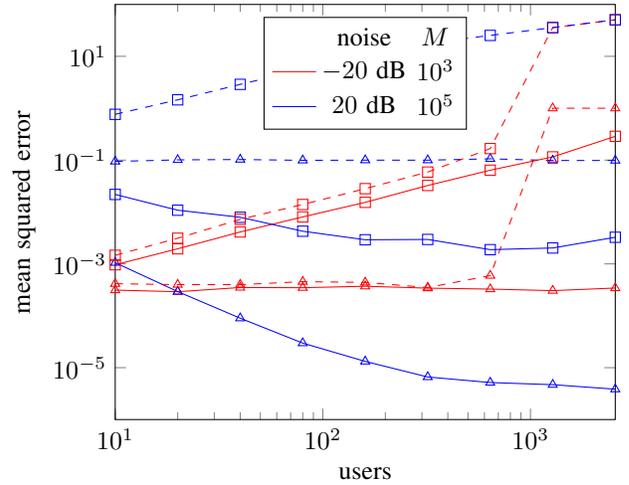
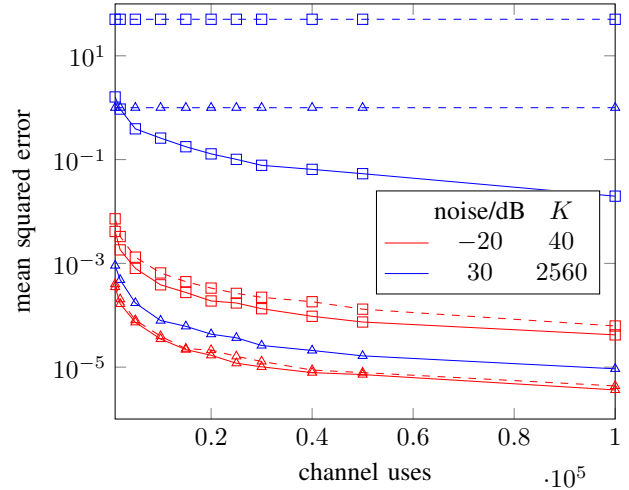

We have simulated the Distributed Function Approximation (DFA) scheme for Rayleigh fading channels with varying noise power, number of users and amount of channel resources. The simulations were done for two different functions, with the function arguments in both cases confined to the unit interval $[0,1]$, to highlight different aspects and properties of the scheme: The arithmetic mean function is linear and maps only to the interval $[0,1]$ (which means that no scheme can have an error larger than $1$), while the Euclidean norm function maps to $[0,\sqrt{K}]$ and can show how the DFA scheme deals with nonlinearities.

We compare with a simple TDMA scheme, in which each user transmits separately in its designated slot, protecting the analog transmission against channel noise in the same fashion as the DFA scheme, but not sharing the channel use with other transmitters. In the case where the number of channel uses available is much larger than the number of users sharing the resources, this form of a TDMA scheme is of course highly suboptimal, as the transmitters could use source and channel coding to achieve a higher reliability. However, such an approach is infeasible if the number of users is so high in comparison to the number of channel uses that only a few or possibly even less than one channel use is available to each user, and in this work we are mainly interested in the scaling behavior of our schemes in the number of users $K$. Therefore, this comparison provides an insight into the gain achieved by exploiting the superposition properties of the wireless channel while keeping in mind that for the regime of low $K$, there are better coded schemes available. We also remark that the DFA scheme only needs coordination between the transmitters insofar as all users need to transmit roughly at the same time, while a TDMA scheme necessitates an allocation of the channel uses to the individual transmitters, which can be costly in the case of high $K$. The simulations carried out in this section do not consider this scheduling problem and assume for the TDMA scheme that the time slots have already been allocated, and this knowledge is available at both the transmitters and the receiver. If $M < K$, there is not at least one channel use available to each user and the TDMA scheme can therefore not be carried out. We set the error in such cases to the maximum of $1$ or $\sqrt{K}$, respectively.

For the simulations, we assume a normalized peak transmitter power constraint of $1$ and channels with fading normalized to a variance of $1$ per complex dimension. The power of the additive noise is given in dB per complex dimension and its negative can therefore be considered as the signal-to-noise ratio (SNR). Each plotted data point is based on an average of $1000$ simulation runs.

The messages transmitted by the users are generated in the following way: First, we draw a value $\mu$, which is common to all transmitters, uniformly at random from $[0,1]$. We then draw the messages of all the users from a convex combination of the uniform distributions on $[0,\mu]$ and $[\mu,1]$ where we choose the weights in such a way that each message has expectation $\mu$. The reason for choosing this procedure although the DFA scheme also performs well for more natural distributions such as i.i.d. uniform in $[0,1]$ for all users is that in case of messages distributed according to a known i.i.d. distribution, the problem is too easy in the sense that both the mean and the Euclidean norm concentrate around values that depend only on the distribution and $K$, and therefore even without any communication at all, the function value can be quite accurately guessed if $K$ is large. On the other hand, we intend the DFA scheme for applications in which the messages can be correlated and distributed according to unknown distributions, so we opt for this form of correlation between the messages for the sake of the numerical evaluation.

In Fig.~\ref{fig:noise}, we can see that the DFA scheme is at least as good as the TDMA schemes for all the plotted data points and outperforms it in most cases, achieving a gain of up to $30$ dB for $K=2560$. For low powers of the additive noise, the effect of the multiplicative fading dominates, and therefore, the error saturates as the additive noise grows weaker. Fig.~\ref{fig:users} illustrates that the DFA scheme performs significantly better if the number of users is not too low, which is due to the superposition of the signals in the wireless channel resulting in a combined signal strength that grows with the number of users. We can also see the TDMA scheme performing similarly to the DFA scheme for low numbers of users, while quickly deteriorating in performance or even becoming infeasible as their number grows. In Fig.~\ref{fig:chuses}, we can observe the exponential decay of the error as the amount of channel resources used increases. Once again, we can observe that the TDMA scheme performs similarly to DFA for a low number of users, but becomes infeasible for larger $K$.

\bibliographystyle{IEEEtran}
\bibliography{references}

\begin{thebibliography}{10}
\providecommand{\url}[1]{#1}
\csname url@samestyle\endcsname
\providecommand{\newblock}{\relax}
\providecommand{\bibinfo}[2]{#2}
\providecommand{\BIBentrySTDinterwordspacing}{\spaceskip=0pt\relax}
\providecommand{\BIBentryALTinterwordstretchfactor}{4}
\providecommand{\BIBentryALTinterwordspacing}{\spaceskip=\fontdimen2\font plus
\BIBentryALTinterwordstretchfactor\fontdimen3\font minus
  \fontdimen4\font\relax}
\providecommand{\BIBforeignlanguage}[2]{{%
\expandafter\ifx\csname l@#1\endcsname\relax
\typeout{** WARNING: IEEEtran.bst: No hyphenation pattern has been}%
\typeout{** loaded for the language `#1'. Using the pattern for}%
\typeout{** the default language instead.}%
\else
\language=\csname l@#1\endcsname
\fi
#2}}
\providecommand{\BIBdecl}{\relax}
\BIBdecl

\bibitem{jordan2015machine}
M.~I. Jordan and T.~M. Mitchell, ``Machine learning: Trends, perspectives, and
  prospects,'' \emph{Science}, vol. 349, no. 6245, pp. 255--260, 2015.

\bibitem{amiri2020machine}
M.~M. Amiri and D.~G{\"u}nd{\"u}z, ``Machine learning at the wireless edge:
  Distributed stochastic gradient descent over-the-air,'' \emph{IEEE
  Transactions on Signal Processing}, vol.~68, pp. 2155--2169, 2020.

\bibitem{ahn2019wireless}
J.-H. Ahn, O.~Simeone, and J.~Kang, ``Wireless federated distillation for
  distributed edge learning with heterogeneous data,'' in \emph{2019 IEEE 30th
  Annual International Symposium on Personal, Indoor and Mobile Radio
  Communications (PIMRC)}.\hskip 1em plus 0.5em minus 0.4em\relax IEEE, 2019,
  pp. 1--6.

\bibitem{mcmahan2017communication}
B.~McMahan, E.~Moore, D.~Ramage, S.~Hampson, and B.~A. y~Arcas,
  ``Communication-efficient learning of deep networks from decentralized
  data,'' in \emph{Proceedings of the 20 th International Conference on
  Artificial Intelligence and Statistics (AISTATS) 2017}, 2017, pp. 1273--1282.

\bibitem{konecny2016federated}
J.~Kone{\v{c}}n{\`y}, H.~B. McMahan, F.~X. Yu, P.~Richt{\'a}rik, A.~T. Suresh,
  and D.~Bacon, ``Federated learning: Strategies for improving communication
  efficiency,'' \emph{arXiv preprint arXiv:1610.05492}, 2016.

\bibitem{gupta2000capacity}
P.~Gupta and P.~R. Kumar, ``The capacity of wireless networks,'' \emph{IEEE
  Transactions on information theory}, vol.~46, no.~2, pp. 388--404, 2000.

\bibitem{gastpar2003source}
M.~Gastpar and M.~Vetterli, ``Source-channel communication in sensor
  networks,'' in \emph{Information Processing in Sensor Networks}.\hskip 1em
  plus 0.5em minus 0.4em\relax Springer, 2003, pp. 162--177.

\bibitem{goldenbaum2013robust}
M.~Goldenbaum and S.~Stanczak, ``Robust analog function computation via
  wireless multiple-access channels,'' \emph{IEEE Transactions on
  Communications}, vol.~61, no.~9, pp. 3863--3877, 2013.

\bibitem{goldenbaum2013harnessing}
M.~Goldenbaum, H.~Boche, and S.~Sta{\'n}czak, ``Harnessing interference for
  analog function computation in wireless sensor networks,'' \emph{IEEE
  Transactions on Signal Processing}, vol.~61, no.~20, pp. 4893--4906, 2013.

\bibitem{goldenbaum2014nomographic}
------, ``Nomographic functions: Efficient computation in clustered gaussian
  sensor networks,'' \emph{IEEE Transactions on Wireless Communications},
  vol.~14, no.~4, pp. 2093--2105, 2014.

\bibitem{nazer2007computation}
B.~Nazer and M.~Gastpar, ``Computation over multiple-access channels,''
  \emph{IEEE Transactions on information theory}, vol.~53, no.~10, pp.
  3498--3516, 2007.

\bibitem{zhan2009mimo}
J.~Zhan, B.~Nazer, M.~Gastpar, and U.~Erez, ``{MIMO} compute-and-forward,'' in
  \emph{2009 IEEE International Symposium on Information Theory}.\hskip 1em
  plus 0.5em minus 0.4em\relax IEEE, 2009, pp. 2848--2852.

\bibitem{ordentlich2011practical}
O.~Ordentlich, J.~Zhan, U.~Erez, M.~Gastpar, and B.~Nazer, ``Practical code
  design for compute-and-forward,'' in \emph{2011 IEEE International Symposium
  on Information Theory Proceedings}.\hskip 1em plus 0.5em minus 0.4em\relax
  IEEE, 2011, pp. 1876--1880.

\bibitem{nazer2011compute}
B.~Nazer and M.~Gastpar, ``Compute-and-forward: Harnessing interference through
  structured codes,'' \emph{IEEE Transactions on Information Theory}, vol.~57,
  no.~10, pp. 6463--6486, 2011.

\bibitem{nazer2016expanding}
B.~Nazer, V.~R. Cadambe, V.~Ntranos, and G.~Caire, ``Expanding the
  compute-and-forward framework: Unequal powers, signal levels, and multiple
  linear combinations,'' \emph{IEEE Transactions on Information Theory},
  vol.~62, no.~9, pp. 4879--4909, 2016.

\bibitem{goldenbaum2016harnessing}
M.~Goldenbaum, P.~Jung, M.~Raceala-Motoc, J.~Schreck, S.~Sta{\'n}czak, and
  C.~Zhou, ``Harnessing channel collisions for efficient massive access in 5{G}
  networks: A step forward to practical implementation,'' in \emph{2016 9th
  International Symposium on Turbo Codes and Iterative Information Processing
  (ISTC)}.\hskip 1em plus 0.5em minus 0.4em\relax IEEE, 2016, pp. 335--339.

\bibitem{kolmogorov1957representation}
A.~N. Kolmogorov, ``On the representation of continuous functions of several
  variables by superposition of continuous functions of one variable and
  addition,'' \emph{Dokl. Akad. Nauk SSS}, vol. 114, pp. 953--956, 1957.

\bibitem{sprecher1965structure}
D.~A. Sprecher, ``On the structure of continuous functions of several
  variables,'' \emph{Transactions of the American Mathematical Society}, vol.
  115, pp. 340--355, 1965.

\bibitem{buck1976approximate}
R.~C. Buck, ``Approximate complexity and functional representation.'' Wisconsin
  University Madison Mathematics Research Center, Tech. Rep., 1976.

\bibitem{buck1982nomographic}
------, ``Nomographic functions are nowhere dense,'' \emph{Proceedings of the
  American Mathematical Society}, pp. 195--199, 1982.

\bibitem{kiril}
K.~Ralinovski, M.~Goldenbaum, and S.~Sta{\'n}czak, ``Energy-efficient
  classification for anomaly detection: The wireless channel as a helper,'' in
  \emph{2016 IEEE International Conference on Communications (ICC)}, 2016, pp.
  1--6.

\bibitem{bjelakovic2019distributed}
I.~{Bjelaković}, M.~{Frey}, and S.~{Stańczak}, ``Distributed approximation of
  functions over fast fading channels with applications to distributed learning
  and the max-consensus problem,'' in \emph{2019 57th Annual Allerton
  Conference on Communication, Control, and Computing (Allerton)}, Sep. 2019,
  pp. 1146--1153.

\bibitem{liu2020over}
W.~Liu, X.~Zang, Y.~Li, and B.~Vucetic, ``Over-the-air computation systems:
  Optimization, analysis and scaling laws,'' \emph{IEEE Transactions on
  Wireless Communications}, 2020.

\bibitem{dong2020blind}
J.~Dong, Y.~Shi, and Z.~Ding, ``Blind over-the-air computation and data fusion
  via provable wirtinger flow,'' \emph{IEEE Transactions on Signal Processing},
  vol.~68, pp. 1136--1151, 2020.

\bibitem{frey2020over}
M.~Frey, I.~Bjelakovic, and S.~Stanczak, ``Over-the-air computation for
  distributed machine learning,'' \emph{arXiv:2007.02648}, 2020.

\bibitem{buldygin}
\BIBentryALTinterwordspacing
V.~Buldygin and Y.~Kozachenko, \emph{Metric Characterization of Random
  Variables and Random Processes}, ser. Cross Cultural Communication.\hskip 1em
  plus 0.5em minus 0.4em\relax American Mathematical Soc., 2000. [Online].
  Available: \url{https://books.google.de/books?id=ePDXvIhdEjoC}
\BIBentrySTDinterwordspacing

\bibitem{wainwright}
M.~J. Wainwright, \emph{High-Dimensional Statistics: A Non-Asymptotic
  Viewpoint}, ser. Cambridge Series in Statistical and Probabilistic
  Mathe\-matics.\hskip 1em plus 0.5em minus 0.4em\relax Cambridge University
  Press, 2019.

\bibitem{vershynin}
R.~Vershynin, \emph{High Dimensional Probability: An Introduction with
  Applications in Data Science}, ser. Cambridge Series in Statistical and
  Probabilistic Mathematics.\hskip 1em plus 0.5em minus 0.4em\relax Cambridge
  University Press, 2018, vol.~47.

\end{thebibliography}
\end{document}